\newcommand{\divides}{\bigm|}
\newtheorem{theorem}{Theorem}
\newtheorem{assumption}{Assumption}
\newtheorem{example}{Example}
\newtheorem{lemma}{Lemma}
\renewcommand{\vec}[1]{\mathbf{#1}}
\newtheorem{remark}{Remark}
\newtheorem{dis}{Discussion}
\newcommand*{\rom}[1]{\expandafter\@slowromancap\romannumeral #1@}
\newenvironment{proof}{\textit{Proof:}}
\newif\ifbio
\newif\ifLongVersion
\begin{document}
%
% paper title
% Titles are generally capitalized except for words such as a, an, and, as,
% at, but, by, for, in, nor, of, on, or, the, to and up, which are usually
% not capitalized unless they are the first or last word of the title.
% Linebreaks \\ can be used within to get better formatting as desired.
% Do not put math or special symbols in the title.

\begin{textblock}{15}(1,1)
\textcolor{red}{This paper formally accepted and published on IEEE Control Systems Letters (\href{https://ieeexplore.ieee.org/document/9130738}{Link}). } 
\end{textblock}

\title{Neuro-Adaptive Formation Control and Target Tracking for Nonlinear Multi-Agent \\ Systems with Time-Delay}

\author{Kiarash Aryankia$^{ \href{https://orcid.org/0000-0002-4751-3925}{\includegraphics[scale=.04]{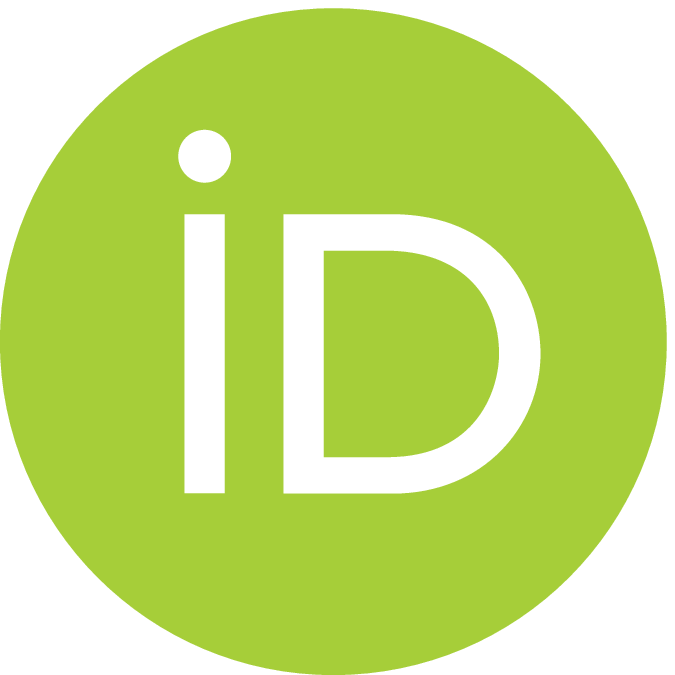}}}$, \IEEEmembership{Student Member, IEEE} and Rastko R. Selmic$^{\href{https://orcid.org/0000-0001-9345-8077}{\includegraphics[scale=.04]{Figures/ORCID-iD_icon-128x128.eps}}}$, \IEEEmembership{Senior Member, IEEE}
%\author{Kiarash~Aryankia,~\IEEEmembership{Student~Member,~IEEE} and Rastko~R.~Selmic,~\IEEEmembership{Senior~Member,~IEEE}% <-this % stops a space
\thanks{The authors acknowledge the support of the Natural Sciences and Engineering Research Council of Canada (NSERC), [funding reference \#: RGPIN-2018-05093]. (\textit{Corresponding Author: Kiarash Aryankia}). The authors are with the Department of Electrical and Computer Engineering,
    Concordia University, Montreal, QC, Canada
    {\tt\small (email: k\_aryank@encs.concordia.ca; \tt\small rastko.selmic@concordia.ca).}}}

\maketitle

%%%%%%%%%%%%%%%%%%%%%%%%%%%%%%%%%%%%%%%%%%%%%%%%%%%%%%%%%%%%%%%%%%%%%%%%%%%%%%%%%%%%%
\begin{abstract}
This paper proposes an adaptive neural network-based backstepping controller that uses rigid graph theory to address the distance-based formation control problem and target tracking for nonlinear multi-agent systems with bounded time-delay and disturbance. The radial basis function neural network (RBFNN) is used to overcome and compensate for the unknown nonlinearity and disturbance in the system dynamics. The effect of the state time-delay of the agents is alleviated by using an appropriate control signal that is designed based on specific Lyapunov function and Young's inequality. The adaptive neural network (NN) weights tuning law is derived using this Lyapunov function. An upper bound for the singular value of the normalized rigidity matrix is introduced, and uniform ultimate boundedness (UUB) of the formation distance error is rigorously proven based on the Lyapunov stability theory. Finally, the performance and effectiveness of the proposed method are validated through the simulation results on nonlinear multi-agent systems. Comparisons between the proposed distance-based method and an existing, displacement-based method are provided to evaluate the performance of the suggested method. 
\end{abstract}
\begin{IEEEkeywords}
  Agents-based systems, delay systems, formation control, neural networks, stability of nonlinear systems.
\end{IEEEkeywords}

\IEEEpeerreviewmaketitle

%%%%%%%%%%%%%%%%%%%%%%%%%%%%%%%%%%%%%%%%%%%%%%t%%%%%%%%%%%%%%%%%%%%%%%%%%%%%%%%%%%%%%%
\section{Introduction}\label{sec1}

%\ifLongVersion
%\IEEEPARstart{R}{ecently,} 
%\else
\IEEEPARstart{F}{ormation} control of multi-agent systems has been inspired by collective animal behavior in nature, e.g. school of fish, formation of birds, pack of wolves, and more. The position-based, displacement-based, and distance-based controls are three general categories of formation control \cite{formation}. The distance-based control, in comparison with other methods, requires fewer measurements and higher interactions among the agents \cite{formation}. 
In distance-based formation control methods, interaction topology  is  usually  described by  graph rigidity or persistence,  while in displacement-based formation control the interaction topology is modelled by connectedness (graph Laplacian) \cite{formation}. Interactions can be modeled either by an undirected or directed graph. 

Attention to the distance-based control of double-integrator multi-agent systems is growing, as their applications are more common in comparison with single-integrator multi-agent systems. In \cite{de2019formation}, a distance-based control of single- and double-integrator multi-agent systems is studied. Similar to our work, in \cite{back}, authors propose the backstepping design for double-integrator multi-agent systems to address the formation control problem with topology switching. 

In comparison with existing distance-based methods such as \cite{sabaj,distance_leadfollow,oh2014distance}, which use single- or double-integrator dynamics for the system, the novelty of this paper is a study of a general class of nonlinear systems with unknown nonlinearity, time-delay on the dynamical states, and disturbance. The formation  control problem of nonlinear multi-agent systems with state time-delay function is studied in \cite{ma2015neural,chendelay,chen2018leader,chen}, where the authors consider a bounded time-delay for a nonlinear multi-agent system. In contrast with these methods where the Laplacian matrix is used to express the relation among the agents, our control law is based on rigid graph theory. While the formation control problems can be modeled with standard graph theory, a rigid graph theory offers modeling that minimizes the number of edges/distances that one needs to control in order to achieve the desired formation.

In this paper, the objectives are to keep agents, modeled by second-order nonlinear systems, in pre-specified distances from each other and to follow the target within a bounded trajectory. To achieve this, we propose a new control design for the nonlinear multi-agent systems relying on the backstepping control and rigid graph theory. The unknown nonlinear part of the system dynamics, as well as the disturbance, are approximated using NN, where an RBFNN (for more details see \cite{Lewisbook}) is used to estimate the unknown dynamics. To achieve the formation target tracking, the leader tracks a target within a bounded trajectory while the rest of the agents maintain the desired formation and follow the leader. 
 
The novelty in this work is a rigorous study and a proof of the multi-agent systems formation stability, when the agents are modeled with the second-order nonlinear dynamics and state time-delay on each agent. The main contributions of this paper are:

(i) The Lyapunov function is selected for the formation control problem and target tracking of second-order nonlinear multi-agent systems. To the best of the authors' knowledge, this paper is the first contribution proposing a distance-based formation for nonlinear multi-agent systems with state time-delay and disturbance. Compared with existing methods in \cite{chendelay,chen}, the dynamics of the system is not limited to a first-order nonlinear systems.

 (ii) Compared with the existing methods \cite{ma2015neural,chendelay,chen2018leader,chen}, which use the Laplacian matrix to express the relation among the agents, we use  rigid graph theory to represent those relations for a general class of nonlinear multi-agent systems. 
 
(iii) The minimum singular value of normalized rigidity matrix for infinitesimally and minimally rigid framework is utilized to design the neuro-adaptive controller using the backstepping technique to address the formation of nonlinear multi-agent systems. Also, an upper bound for the minimum singular value of the normalized rigidity matrix is introduced.
 
%%%%%%%
%            NOTATIONS 
%%%%%%%%%%%%
 
%\subsection{Notations} 
%Let $I_n$ denotes $n\times n$ identity matrix. Let $tr(A)$ denotes summation of diagonal terms of matrix $A$ and $\det(A)$ denotes determinant of matrix $A$. Let $\otimes$ denotes the Kronecker product. With $^T$ we denote the transpose of matrix. Also, by $\|.\|$ we denote the Euclidean norm and $||.||_F$ denotes Frobenius norm where for any matrix $A$, $||A||_F=  \sqrt{tr(A^TA)}$. The notation $\otimes$ signifies Kronecker product, $\textbf{1}_n$,is a $n \times 1$ vector of ones and  $\textbf{0}_{m}$ is a $m \times 1$ vector of zeros. Notation $sgn(.)$ denotes standard sign function. With $\bar{\sigma}(A)$ and $\underline{\sigma}(A)$ we denote the maximum and minimum singular value of the matrix $A$, respectively. Also, $\bar{\lambda}(A)$ and $\underline{\lambda}(A)$ denote the maximum and minimum eigenvalues of the matrix $A$, respectively.   For a piece-wise continuous, bounded function $x$, $||x||_{\infty}= \underset{t\ge 0}{sup}||x(t)||$.
    
 \ifLongVersion   
%This paper is organized as follows. In Section II, the problem formulation and required background of rigid graph theory are given. In Section III, the design of target tracking backstepping controller based on RBFNN using Lyapunov technique for a class of nonlinear systems is introduced. In addition, the stability analysis and performance of the multi-agent systems are presented. The extension of the results for a general class of nonlinear systems is provided in Section IV. In Section V, the numerical simulations results are presented, and they are compared with two different displacement-based methods to illustrate the effectiveness of the proposed method. Finally, conclusion is given in Section VI.\fi    
    
\section{Preliminaries and Problem Formulation} \label{sec2}

\subsection{Preliminaries}
Interaction among the agents of the multi-agent systems is modeled by an undirected graph $G=(V,E)$, where $V=\{v_1,...,v_N\}$ specifies a set of vertices and $E\subset V \times V $ is its set of edges. With $|V|$ and $|E|$, we denote the number of vertices and edges, respectively. A pair of $(G,p)$ is known as a framework where $p=(p_1,...,p_N)$, $p_i\in \mathbb{R}^d$, such that $d\in\{2,3\}$, is the position of vertex $i$ in 2- or 3-dimensional space. More details about rigidity matrix can be found in \cite{sabaj,ECC20}.

The number of edges and vertices in 2D is related to rigidity through Laman's theorem.
\begin{theorem}[\hspace{-1.4mm}~\cite{laman}]\label{leman1}
A graph $G(V,E)$ in the plane is rigid, if and only if there exists a subgraph $G_1=(V,E_1)$ where $E_1\subset E$  with $|E_1|=2|V|-3$ in order that for any $V_1\subset V$, the associated induced subgraph $G_2=(V_1,E_2)$ of $G_1$ with $E_2\subset E_1$, satisfies $|E_2| \le 2|V_1|-3$. 
\end{theorem}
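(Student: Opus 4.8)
The plan is to translate this combinatorial statement into the language of the rigidity matrix and the generic rigidity matroid, and then prove the two implications separately. Recall that a framework $(G,p)$ in the plane is infinitesimally rigid exactly when its rigidity matrix $R(p)\in\mathbb{R}^{|E|\times 2|V|}$ has rank $2|V|-3$, the subtracted $3$ accounting for the two translations and one rotation that leave every framework invariant, and that for a generic placement $p$ infinitesimal rigidity and rigidity coincide. I would therefore fix a generic $p$ and work entirely with $\mathrm{rank}\,R(p)$. The edge set then carries the generic rigidity matroid, whose independent sets are the edge subsets whose rows in $R(p)$ are linearly independent. With this reduction the theorem becomes the assertion that the independent sets of this matroid are precisely the edge subsets $E'$ whose spanned vertex sets $V'$ satisfy $|E'|\le 2|V'|-3$, and that $G$ is rigid iff it contains a spanning independent set of size $2|V|-3$, which is the desired $G_1$.

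Necessity of the count (the ``only if'' direction) I expect to be routine. For any $V_1\subset V$ the rows of $R(p)$ indexed by the induced edges $E_2$ depend only on the $2|V_1|$ coordinates of the vertices in $V_1$, and the induced subframework still admits the three trivial motions; hence that row block has rank at most $2|V_1|-3$. If the rows of $E_1$ are independent then so are those of $E_2\subset E_1$, forcing $|E_2|\le 2|V_1|-3$. Taking $V_1=V$ gives $|E_1|\le 2|V|-3$, while rigidity of $G$ supplies a spanning independent set attaining $2|V|-3$.

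The sufficiency direction — that a graph meeting the Laman count is generically rigid — is the genuinely hard part, and I would prove it by induction on $|V|$ using Henneberg constructions, with base case the triangle $K_3$. For the inductive step one first argues combinatorially from $\sum_{v}\deg(v)=2(2|V|-3)$ together with the sparsity bound (which also rules out degree $\le 1$) that every Laman graph has a vertex of degree exactly $2$ or $3$. Deleting a degree-$2$ vertex yields a smaller Laman graph (inverse Henneberg type-1); deleting a degree-$3$ vertex and reinserting a suitable edge among its three neighbors does the same (inverse type-2), where one must show such a valid edge always exists. Applying the inductive hypothesis to the reduced graph and then showing the forward move preserves generic rigidity closes the induction, the type-1 (vertex-addition) case being immediate.

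The main obstacle is showing that the Henneberg type-2 (edge-splitting) move preserves generic rigidity. Here one removes an edge $uw$ of a generically rigid framework on $|V|-1$ vertices, adds a new vertex $z$, and joins $z$ to $u$, $w$, and a third vertex $x$; the difficulty is that for some placements of $z$ the enlarged rigidity matrix fails to attain the rank $2|V|-3$ needed for rigidity. I would resolve this with a dimension and continuity argument: exhibit one special placement of $z$ for which the enlarged matrix provably has full rank — using the inductive rigidity of the reduced framework and a careful choice such as $z$ collinear with $u$ and $w$ — and then invoke the fact that attaining maximal rank is an open and dense (generic) condition, so the property holds for almost every placement, which is exactly what genericity requires.
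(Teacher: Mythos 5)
The paper does not prove this statement at all: it is Laman's theorem, imported verbatim from the cited reference \cite{laman}, so there is no in-paper argument to compare yours against. Your blind attempt is the standard modern proof (generic rigidity matroid plus Henneberg constructions), and its outline is sound: the reduction to the rank of the rigidity matrix at a generic placement, the ``only if'' direction via the observation that the rows indexed by an induced edge set $E_2$ live in the coordinates of $V_1$ and still annihilate the three trivial motions (note this forces $|E_2|\le 2|V_1|-3$ only for $|V_1|\ge 2$; the theorem's quantifier over \emph{all} $V_1\subset V$ should be read with that restriction, since for $|V_1|\le 1$ the bound is negative), and the ``if'' direction by induction using vertex addition and edge splitting, with the collinear placement of the new vertex $z$ on the line through $u$ and $w$ followed by semicontinuity of rank to transfer the conclusion to generic placements. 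That said, what you have is an outline rather than a proof: the two genuinely hard lemmas --- that a degree-$3$ vertex of a Laman graph can always be removed and one of the three edges among its neighbours reinserted without violating the sparsity count, and the rank computation showing the collinear edge-split gains exactly two independent rows --- are named and correctly located but not carried out. If you intend this as a self-contained replacement for the citation, those two steps are where all the work lies and would need to be written in full; as a road map of the standard argument it is accurate.
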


A graph is minimally rigid in 2D if and only if $|E| = 2|V|-3$; for more details please see \cite{mesbahi2010graph}.
\begin{lemma}[\hspace{-1.4mm}~\cite{mesbahi2010graph}]\label{lem11}
A framework in 2D, with $N>2$ is infinitesimally rigid, if and only if $rank(R_p)=2N-3$, where $R_p$ denotes the rigidity matrix of graph $G$.
\end{lemma}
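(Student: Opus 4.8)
The plan is to translate infinitesimal rigidity into a statement about the kernel of $R_p$ and then conclude by the rank--nullity theorem. First I would recall the construction of the rigidity matrix: $R_p$ is the $|E|\times 2N$ matrix whose row associated with an edge $(i,j)\in E$ is obtained by differentiating the squared edge length $\|p_i-p_j\|^2$, so that this row carries $(p_i-p_j)^\top$ in the two columns of vertex $i$, $(p_j-p_i)^\top$ in the two columns of vertex $j$, and zeros elsewhere. It follows immediately that a velocity field $\dot p=(\dot p_1,\dots,\dot p_N)\in\mathbb{R}^{2N}$ obeys $R_p\dot p=0$ exactly when $(p_i-p_j)^\top(\dot p_i-\dot p_j)=0$ for every edge; hence $\ker R_p$ is precisely the space of infinitesimal motions of the framework $(G,p)$.

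Next I would isolate the trivial infinitesimal motions, namely those arising from rigid translations and rotations of the whole configuration, and show that they always lie in $\ker R_p$. Because a rigid motion preserves every pairwise distance, it preserves each edge length to first order, so the associated velocity field is annihilated by $R_p$. I would then establish that this trivial-motion space has dimension exactly $3$ in the plane, which forces $\dim\ker R_p\ge 3$ and therefore the universal bound $\operatorname{rank}(R_p)=2N-\dim\ker R_p\le 2N-3$ for every framework.

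Finally I would use the definition: the framework is infinitesimally rigid if and only if $\ker R_p$ contains only trivial motions, i.e. if and only if $\ker R_p$ coincides with the $3$-dimensional trivial-motion space. Since the trivial motions are already contained in $\ker R_p$, equality of the two spaces is equivalent to the dimension count $\dim\ker R_p=3$, which by rank--nullity is equivalent to $\operatorname{rank}(R_p)=2N-3$; this is the claimed equivalence. The main obstacle is the middle step, verifying that the trivial-motion space is exactly $3$-dimensional and not larger. I would settle it by exhibiting three generators---the two coordinate translations $\dot p_i=e_1$ and $\dot p_i=e_2$ for all $i$, and the infinitesimal rotation $\dot p_i=J(p_i-\bar p)$ with $J=\bigl[\begin{smallmatrix}0&-1\\1&0\end{smallmatrix}\bigr]$ and $\bar p$ the centroid---and checking that, since the $N>2$ agents occupy distinct positions, these three vectors are linearly independent and span the full space of rigid-body velocities.
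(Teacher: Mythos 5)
Your proposal is correct: the paper does not prove this lemma but simply cites it from the graph-theory literature, and your argument (kernel of $R_p$ equals the space of infinitesimal motions, trivial motions form a $3$-dimensional subspace in the plane when the points are not all coincident, and rank--nullity converts infinitesimal rigidity into $\operatorname{rank}(R_p)=2N-3$) is exactly the standard proof found in the cited reference. The one point worth being careful about, which you do address, is verifying that the translation and rotation generators are linearly independent, which holds as soon as at least two of the $N>2$ positions are distinct.
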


\begin{lemma}[\hspace{-1.4mm}~\cite{de2019formation}]\label{lem1}
    For a vector $v_r \in R^2 $ and identity vector $\textbf{1}_N$, we have $R_p( \textbf{1}_N \otimes v_r)=\textbf{0}_{|E|}$.
\end{lemma}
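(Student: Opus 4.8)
The plan is to prove the identity by a direct row-by-row computation that exploits the block structure of the rigidity matrix together with the fact that $\textbf{1}_N \otimes v_r$ assigns the identical vector $v_r$ to every vertex. First I would recall the explicit form of $R_p \in \mathbb{R}^{|E| \times 2N}$ from the rigidity-matrix construction referenced in the preliminaries: indexing the columns in blocks of size two (one block per vertex), the row associated with an edge $e=(i,j)$ contains $(p_i-p_j)^T$ in the $i$-th block, $(p_j-p_i)^T$ in the $j$-th block, and zeros in all remaining blocks.

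Next I would evaluate the $e$-th entry of the product $R_p(\textbf{1}_N \otimes v_r)$. Since $\textbf{1}_N \otimes v_r$ stacks $N$ copies of $v_r$, only the two nonzero blocks of the edge row contribute, so the entry equals $(p_i-p_j)^T v_r + (p_j-p_i)^T v_r$. Using $(p_j-p_i)^T = -(p_i-p_j)^T$, this collapses to $(p_i-p_j)^T v_r - (p_i-p_j)^T v_r = 0$. Because the edge $e$ was arbitrary, every component of the product vanishes, which gives $R_p(\textbf{1}_N \otimes v_r)=\textbf{0}_{|E|}$.

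The conceptual content behind this algebra is that $\textbf{1}_N \otimes v_r$ represents a common translation of all agents, which leaves every inter-agent distance unchanged and therefore lies in the kernel of the rigidity matrix; the endpoint cancellation above is simply the edge-level manifestation of that translation invariance. There is no substantive obstacle here: the only point requiring care is the bookkeeping of indices, namely aligning the Kronecker-product ordering of $\textbf{1}_N \otimes v_r$ with the block-column ordering of $R_p$ so that the two endpoint contributions of each edge are correctly paired. Once the indexing convention is fixed consistently, the cancellation is immediate and the claim follows.
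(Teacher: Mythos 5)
Your proof is correct and is the standard argument: the row of $R_p$ for edge $(i,j)$ carries $(p_i-p_j)^T$ and $(p_j-p_i)^T$ in the two endpoint blocks, so pairing each with the same copy of $v_r$ from $\textbf{1}_N \otimes v_r$ cancels exactly, reflecting translation invariance of inter-agent distances. The paper itself offers no proof (it simply cites \cite{de2019formation}), and your computation is precisely the argument that reference relies on, so there is nothing to add.
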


Let us define the normalized rigidity matrix, $\bar{R}_p$, where each row of the rigidity matrix is divided by the two-norm of that row. Similar notion was first introduced in \cite{babazadeh2019distance}. Two properties of normalized rigidity matrix are given below.
   
\begin{lemma}\label{L2}
For infinitesimally and minimally rigid framework in 2D, the diagonal elements of $\bar{R}_p\bar{R}^T_p$ are equal to \textit{two}.
\end{lemma}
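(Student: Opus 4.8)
The plan is to exploit the explicit sparse structure of the rigidity matrix and reduce the claim to a one-line computation of the squared two-norm of a single normalized row, since the $(k,k)$ diagonal entry of $\bar{R}_p\bar{R}^T_p$ is nothing but the inner product of the $k$-th row of $\bar{R}_p$ with itself.

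First I would recall the standard form of the rigidity matrix $R_p \in \mathbb{R}^{|E|\times 2N}$: the row associated with edge $k=(i,j)\in E$ carries the block $(p_i-p_j)^T$ in the two columns indexing vertex $i$, the block $(p_j-p_i)^T = -(p_i-p_j)^T$ in the two columns indexing vertex $j$, and zeros elsewhere. Thus every row contains exactly two nonzero $1\times 2$ blocks that are negatives of one another, and all remaining entries vanish. Note that infinitesimal rigidity forces $p_i\neq p_j$ along every edge (a coincident pair would zero out the corresponding row and destroy the rank condition of Lemma~\ref{lem11}), so each edge length $\|p_i-p_j\|$ is strictly positive and the normalization is well defined.

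Next I would pass to $\bar{R}_p$ by dividing the relative-position vector in row $k$ by its length $\|p_i-p_j\|$, which replaces the two nonzero blocks with the unit vectors $\hat{e}_{ij}^T = (p_i-p_j)^T/\|p_i-p_j\|$ and $-\hat{e}_{ij}^T$. The $(k,k)$ diagonal element of $\bar{R}_p\bar{R}^T_p$ then equals the squared two-norm of this normalized row, namely $\|\hat{e}_{ij}\|^2 + \|-\hat{e}_{ij}\|^2 = 1 + 1 = 2$, and since the argument is identical for every edge index $k$, all diagonal entries equal \emph{two}, as claimed.

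The main point to get right is the bookkeeping behind the factor of two: it arises precisely because each edge contributes its normalized relative-position vector \emph{twice}, once at each endpoint with opposite sign, so the two endpoint blocks each add a unit contribution to the squared row norm. The minimal and infinitesimal rigidity hypotheses enter only in a supporting capacity, guaranteeing distinct endpoints so that the per-row normalization is well defined and $\bar{R}_p$ is consistently constructed; they do not affect the value of the diagonal computation itself.
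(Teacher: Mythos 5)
Your proof is correct and follows essentially the same route as the paper's: both identify the $(k,k)$ diagonal entry of $\bar{R}_p\bar{R}^T_p$ with the squared norm of the $k$-th normalized row, whose only nonzero entries are the unit vector $(p_i-p_j)/\|p_i-p_j\|$ and its negative, giving $1+1=2$. Your added remark that rigidity guarantees $p_i\neq p_j$ (so the normalization is well defined) is a small but worthwhile supplement that the paper leaves implicit.
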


\begin{proof}
  Let  $\bar{R}_p(i,:)$ be an arbitrary row in normalized rigidity matrix as 
 {\small{ \begin{equation}\label{svd_rigid0_3}
   % \bar{R}_p(i,:)=  
 [0 ... 0,  \bar{R}_{2m-1,n},\  \bar{R}_{2m,n}, 0... 0, -\bar{R}_{2m-1,n},
      -\bar{R}_{2m,n}, 0 ...  0 ],
    \end{equation}}}with $\bar{R}_{2m-1,n}=(x_{mn})/(\sqrt{x^2_{mn}+y^2_{mn}})$ and $\bar{R}_{2m,n}=(y_{mn})/(\sqrt{x^2_{mn}+y^2_{mn}})$ being two consecutive elements of $2m-1$ and $2m$, respectively. Then, one can show that $\bar{R}_p(i,:)\bar{R}^T_p(i,:)= 2(\bar{R}^2_{2m-1,n}+\bar{R}^2_{2m,n}) =2$. \hfill $\blacksquare$ 
\end{proof}
 With  $\underline{\lambda}(\bar{R})$ and ${\lambda}_i(\bar{R})$, we denote the minimum and $i^{th}$ eigenvalue of $\bar{R}$, respectively.
 
\begin{theorem}\label{theor1}
Let the interaction among the agents in 2D be modeled by an undirected, infinitesimally and minimally rigid graph and the desired formation be any polygon with $N$ vertices. Then the minimum singular value of the normalized rigidity matrix is upper bounded by $\sqrt{2}$.
\end{theorem}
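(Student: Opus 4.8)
The plan is to relate the minimum singular value of $\bar{R}_p$ to the eigenvalues of the Gram matrix $\bar{R}_p\bar{R}^T_p$ and then bound the smallest eigenvalue using a trace argument. First I would recall the standard fact that the nonzero singular values of $\bar{R}_p$ are exactly the square roots of the eigenvalues of $\bar{R}_p\bar{R}^T_p$, so that $\underline{\sigma}(\bar{R}_p)=\sqrt{\underline{\lambda}(\bar{R}_p\bar{R}^T_p)}$. Since the framework is minimally rigid, $|E|=2N-3$, which fixes the dimension of $\bar{R}_p$ as $(2N-3)\times 2N$, and hence $\bar{R}_p\bar{R}^T_p$ is a square $(2N-3)\times(2N-3)$ matrix. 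By Lemma~\ref{lem11}, $rank(R_p)=2N-3$, so $\bar{R}_p$ has full row rank; consequently $\bar{R}_p\bar{R}^T_p$ is positive definite with exactly $2N-3$ positive eigenvalues, each equal to the square of a genuine singular value.

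The central step is a trace computation. By Lemma~\ref{L2}, every diagonal entry of $\bar{R}_p\bar{R}^T_p$ equals \emph{two}, so
\begin{equation}\label{trace_eq}
tr\!\left(\bar{R}_p\bar{R}^T_p\right)=\sum_{i=1}^{2N-3}\left[\bar{R}_p\bar{R}^T_p\right]_{ii}=2(2N-3).
\end{equation}
Because the trace of a symmetric matrix equals the sum of its eigenvalues, \eqref{trace_eq} gives
\begin{equation}\label{sum_eig}
\sum_{i=1}^{2N-3}\lambda_i(\bar{R}_p\bar{R}^T_p)=2(2N-3),
\end{equation}
so the average eigenvalue is exactly $2$. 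Since the minimum of a finite collection never exceeds its average, I obtain
\begin{equation}\label{min_bound}
\underline{\lambda}(\bar{R}_p\bar{R}^T_p)\le\frac{1}{2N-3}\sum_{i=1}^{2N-3}\lambda_i(\bar{R}_p\bar{R}^T_p)=2.
\end{equation}

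Taking square roots in \eqref{min_bound} and using the singular-value/eigenvalue correspondence yields $\underline{\sigma}(\bar{R}_p)=\sqrt{\underline{\lambda}(\bar{R}_p\bar{R}^T_p)}\le\sqrt{2}$, which is the claimed bound. This argument is quite short, so the only real obstacle is bookkeeping rather than analysis: one must confirm that minimal rigidity ($|E|=2N-3$) together with Lemma~\ref{lem11} guarantees $\bar{R}_p\bar{R}^T_p$ is genuinely $(2N-3)\times(2N-3)$ and nonsingular, so that the $2N-3$ eigenvalues appearing in \eqref{sum_eig} are precisely the squares of all the singular values of $\bar{R}_p$ and the bound in \eqref{min_bound} applies to the true minimum singular value. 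The hypothesis that the desired formation is a polygon with $N$ vertices is what ensures the framework is realizable as infinitesimally and minimally rigid, so that Lemmas~\ref{L2} and~\ref{lem11} are in force.
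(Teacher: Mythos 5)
Your proposal is correct and follows essentially the same route as the paper's own proof: both use Lemma~\ref{L2} to compute $tr(\bar{R}_p\bar{R}^T_p)=2(2N-3)$, equate the trace with the sum of the eigenvalues, and bound the minimum eigenvalue by the average to conclude $\underline{\sigma}(\bar{R}_p)\le\sqrt{2}$. Your version simply spells out the bookkeeping (full row rank of $\bar{R}_p$ via Lemma~\ref{lem11} and the singular-value/eigenvalue correspondence) more explicitly than the paper does.
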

 
\begin{proof}
Consider a normalized rigidity matrix for a rigid graph with $N$ vertices. The normalized rigidity matrix has $2|N|-3$ rows (Lemma \ref{leman1}), diagonal terms in the matrix $\bar{R}=\bar{R}_p\bar{R}^T_p$ have a value of two (Lemma \ref{L2}), and $\bar{R}$ is full rank (Lemma \ref{lem11}). For any symmetric matrix $A$, one has $\sum \lambda_i (A) = tr(A)$. Therefore, $\sum \lambda_i (\bar{R}) =2(2N-3)$. As $\bar{R}$ is a positive definite, one has $ (2N-3) \underline{\lambda}( \bar{R})\le 2(2N-3)$, which yields $\underline{\sigma}( \bar{R}_p)  \le \sqrt{2}$ .  \hfill $\blacksquare$ 
\end{proof}

\begin{example}\label{square}
The minimum singular value of a normalized rigidity matrix for an infinitesimally and minimally rigid square framework in 2D (Fig. \ref{top1}) is $ \sqrt{2-\sqrt{2}}$.
\end{example}

\begin{figure}[t]
\hspace{.35cm}\includegraphics[scale=0.4]{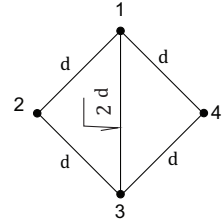} 
\caption{Formation and communication of an infinitesimally and minimally rigid framework with four agents in 2D.}\label{top1} 
\end{figure}

We assume that the multi-agent system communication graph is given by $G(V,E)$. The graph is assumed to be infinitesimally and minimally rigid, thus implying the connectivity \cite{jackson2007notes}.

\subsection{Problem Formulation}
Consider a second-order nonlinear multi-agent system consisting of $N$ agents where the dynamics of the $i$-th agent is given by   
\begin{equation}\label{eq1}
      \begin{split}
    \dot{p}_{i}=&v_{i} , \\
 \dot{v}_{i}=&f_{i}(p_{i},v_i)+g_{i}(p_{i},v_i)u_{i}(t)+\\&h_{i}(p_{i}(t-\tau_i),v_i(t-\tau_i)) 
  + w_{i}(p_i,v_i,t),   \ \ \ \
    \end{split}
          \end{equation}where the vectors $p_{i}\in\mathbb{R}^2$ and $v_{i}\in \mathbb{R}^2 $ represent the position and velocity of each agent respectively, $f_i(.)$ and $h_i(.) \in \mathbb{R}^2$ are the unknown smooth vector functions, considered to be continuously differentiable and locally Lipchitz and $\tau_i$ is an unknown time-delay,  $u_i\in \mathbb{R}^2$ is the control input, and $w_i(.) \in \mathbb{R}^2$ is a disturbance affecting each agent. Matrix $g_i(p_{i},v_i), \mathbb{R}^4 \rightarrow \mathbb{R}^{2 \times 2}$ is an unknown matrix. 

We establish standard assumptions \cite{lewis, chendelay,chen2018leader}, as follows.
 
%\begin{assumption} \label{assump1}
%Function vector $f_i(.)$ is considered to be a continuously differentiable function.
%\end{assumption}
 
\begin{assumption}\label{assump1}
    Unknown matrix $g_i(.)$ is either positive or negative definite, symmetric matrix with eigenvalues satisfying $0< \underline{g}_i\le  ||\lambda_1(g_i(.))|| \le  || \lambda_2(g_i(.))|| <\infty$, $i\in \{1,...,N\}$, and with $\underline{g}_i$ being a constant lower bound.
\end{assumption}
 
\begin{assumption} \label{assump3_1}
The vector function $h_i(.)$ is considered to be bounded, i.e., there exist a known function $\Upsilon_i$ such that $||h_i(x_i(t))||\le \Upsilon_i(x_i(t))$.
\end{assumption}

\begin{assumption} \label{assump2} 
Disturbance dynamics $w_i(x_i(t),t)$ is an unknown vector function that satisfies $||w_i(x_i,t)||\le\rho_i(x_i(t))$, where $\rho_i(x_i(t))$ is an unknown positive smooth function.
\end{assumption}  
 
\begin{assumption} \label{assump3_2}
Time-delay $\tau_i$ is unknown and bounded by $||\tau_i||\le \tau_M $, with a fixed bound $\tau_M$ for $i \in \{1, ... , N \}$.
\end{assumption}  
 The variable $x_i(t)$ will be defined later in the paper.   

In target tracking, we consider the \textit{first} agent to be the leader, while the remaining agents are followers. The control objectives are: (i) the leader tracks the target; (ii) the distance between neighboring agents $i$ and $j$ converges to desired distance $d_{ij}$:
\begin{equation}\label{eq15_2}
    \begin{array}{lr}
    ||p_i-p_j|| \rightarrow d_{ij} \ \ \ as \  t\rightarrow \infty, (i,j)\in E,
    \end{array}
\end{equation}where the $d_{ij}$s are positive and bounded by $\max(d_{ij})<D ,\forall (i,j) \in E$, with a fixed bound $D$.

Tracking error between the leader and the target is defined as $e_r= p_r-p_1$, where vector $e_{r}(t)=[e_{r1},e_{r2}]^T$ has two components $e_{r1}$ and $e_{r2}$ representing the leader's tracking error in $x$ and $y$ directions, respectively. 
Also, the vectors $p_r$ and $p_1$ $\in \mathbb{R}^2$ denote target and leader positions, respectively, and we define $v_r \triangleq \dot{p}_r $.
\begin{assumption}\label{assump7}
Time-function vectors $p_r$, $\dot{p}_r$, $\ddot{p}_r$ are bounded. The relative position and velocity of leader with respect to target, $p_r - p_1$, $\dot{p}_r- \dot{p}_1$, as well as the velocity of the target, $\dot{p}_r$, are known and can be broadcast to the followers \cite{de2019formation}. 
\end{assumption} 
Based on Assumption \ref{assump7}, let define a compact set such that  $\Omega_r= \{p_r,\dot{p}_r,\ddot{p}_r \ | \ ||p_r|| \le \bar{P}_r,||\dot{p}_r|| \le \bar{V}_r, ||\ddot{p}_r|| \le \bar{A}_r \}$, with fix bounds of $\bar{P}_r$, $\bar{V}_r$ and $\bar{A}_r$. 
\begin{remark}
The leader can estimate the position and velocity of the moving target using, for instance, radar technology \cite{sensor}. Also, it is able to broadcast the relative position and velocity to the followers as well as the target's velocity.
\end{remark}

\section{Formation Control of Second-Order Nonlinear Systems}\label{sec3}
\subsection{Control Algorithm Design}
The distance error for the multi-agent system (\ref{eq1}), modeled by an undirected graph, is given by $e_{ij}=||{p}_{ij}||-d_{ij}$ \cite{sabaj}. The distance error dynamics can be derived as 
    \begin{equation}\label{eq4}
    \begin{array}{lr} 
    \dot{e}_{ij}=\frac{{p}^T_{ij}(\dot{p}_{i}-\dot{p}_{j})}{||{p}^T_{ij}||}=\frac{{p}^T_{ij}(v_{i}-v_{j})}{e_{ij}+d_{ij}},
    \end{array}
\end{equation}
where $p_{ij}=p_i-p_j$. Let us define an energy function
   \begin{equation}\label{eq5}
    \begin{array}{lr} 
    M_1(e)= \frac{1}{2}\sum_{(i,j)\in E }^{} e^2_{ij}.
    \end{array}
\end{equation}Considering (\ref{eq4}) and taking a time-derivative of (\ref{eq5}), $\dot{M}_1$ is given by
   \begin{equation}\label{eq7}
    \begin{array}{lr}
    \dot{M}_1= \sum  \beta(e) \frac{{p}_{ij}^T    (v_i-v_j)} {||p_{ij}||}=\beta^T(e) \bar{R}_p \vec{x}_2,
    \end{array}
\end{equation}where $\beta(e)= (..., e_{ij},... )\in \mathbb{R}^{|E|}$ for $(i,j) \in E$. Moreover, $\vec{x}_1=[{p^T_1},...,{p^T_N}]^T$ and $\vec{x}_2=[{v^T_1},...,{v^T_N}]^T \in \mathbb{R}^{2N}$ is defined as an overall velocity vector for all agents.
    Using the backstepping technique \cite{sabaj}, we define $\textbf{s}=[ s^T_{1},  ...  , s^T_{N}]^T\in\mathbb{R}^{2N}$, $\vec{s}=\vec{x}_2-\nu$, 
where $\nu$ is an auxiliary variable given by 
\begin{equation}\label{eq10}
    \begin{array}{lr}
    \nu= u_s+\mathbf{1}_N\otimes (v_r+k_r e_r),
    \end{array}
\end{equation}
with $u_s=-k_v \bar{R}^T_p \beta (e)$ and $k_v$ being a positive constant.  

From (\ref{eq10}) one has
\begin{equation}
\begin{split}
    \nu_i&=-k_v \sum_{j\in N_i} (\frac{{p}_{ij} } {||p_{ij}||}e_{ij} )+(v_r+k_r e_r),  \\ 
     s_i&=v_i+k_v \sum_{j \in N_i}( \frac{{p}_{ij} } {||p_{ij}||}e_{ij}) -(v_r+k_r e_r).
    \end{split}
\end{equation}  
Achieving the desired formation in (\ref{eq10}) relies on $u_s$, while $\mathbf{1}_N\otimes (v_r+k_r e_r)$ is the term for tracking of the target by the leader and other agents. Let us define the potential function $V_1=M_1+M_2$, where $M_2=\frac{1}{2}\vec{s}^T\vec{s}=\frac{1}{2} \sum_{i=1}^{N} s_i^T s_i$. Taking time-derivative of $V_1$, using \textit{Lemma \ref{lem1}}, and equation (\ref{eq10}), one has
{\small{  \begin{equation}\label{eq12} 
        \begin{split} 
    \dot{V}_1 & = \beta^T(e) \bar{R}_p \vec{x}_2 + \vec{s}^T \dot{\vec{s}} \\
    & = \beta^T(e) \bar{R}_p \nu + \vec{s}^T [\dot{\vec{x}}_2 +\bar{R}_p^T\beta(e) -\dot{\nu}  ] \\
       &   = -k_v \beta^T(e)\bar{R}_p \bar{R}_p^T\beta(e)  + \sum_{i=1}^{N} s_i^T [f_i({x_i}) \\& +g_i({x_i})u_i+h_i({x_i}(t-\tau_i))+w_i({x_i},t)\\& +\sum_{j \in N_i}( \frac{{p}_{ij} } {||p_{ij}||}e_{ij}) -\dot{\nu}_i  ],
    \end{split}
\end{equation}}}where $x_i= [p_i^T,v_i^T]^T$. 
 
Using Assumptions \ref{assump3_1}-\ref{assump2} for (\ref{eq12}), and applying Cauchy's and Young's inequalities we have  
    \begin{equation}\label{eq12__1}
         \begin{split}
             s_i^T h_i({x}_i(t-\tau_i)) &\le \frac{||s_i||^2}{2}+\frac{\Upsilon_i^2({x}_i(t-\tau_i))}{2}, \\
             s_i^T w_i({x}_i,t)   &\le \frac{1}{2}+\frac{ ||s_i||^ 2\rho^2_i({x}_i(t))}{2}.
         \end{split}
     \end{equation}Substituting (\ref{eq12__1}) into (\ref{eq12}), one has  
{\small{\begin{equation}\label{eq26_1}
    \begin{split}
    \dot{V}_1 & \le -k_v \beta^T(e)\bar{R}_p \bar{R}_p^T\beta(e)  +\sum_{i=1}^{N} \bigg{(}s_i^T [f_i({x}_i) -\dot{\nu}_i\\ 
    & + (\sum_{j\in N_i}\frac{{p}_{ij}} {||p_{ij}||}e_{ij})   ]+\frac{ ||s_i||^2 \rho^2_i({x}_i(t))}{2} +\frac{||s_i||^2}{2} \\
    & +\frac{\Upsilon^2_i({x}_i(t-\tau_i))}{2}+  s_i^T g_i({x}_i)u_i\bigg{)}+\frac{N}{2}   .
    \end{split}
\end{equation}}}
To compensate for the unknown function $h_i(x_i(t-\tau_i))$, which is upper bounded by $\Upsilon_i(x_i)$, we add the following term to potential function $V_1$
 \begin{equation}\label{eqM_3}
    \begin{array}{lr}
M_3=\frac{1}{2}\sum_{i=1}^{N}\int_{t-\tau_i}^t \Upsilon_i^2(x_i(z)) dz,
    \end{array}
    \end{equation}
and its time-derivative is given by
 \begin{equation}\label{eq27_1}
    \begin{array}{lr}
    \dot{ M}_3= \frac{1}{2}\sum_{i=1}^{N}\big{(}\Upsilon_i^2({x}_i(t))-\Upsilon_i^2({x}_i(t-\tau_i))\big{)} .
    \end{array}
    \end{equation}Let us define potential function $V_2$ as  $V_2=V_1+ M_3$. 
      Taking time-derivative of $V_2$, and from inequality (\ref{eq26_1}) and equation (\ref{eq27_1}) yields  
 {{       \begin{equation}\label{eq30_1}
        \begin{split}
    \dot{V}_2 & \le -k_v \beta^T(e)\bar{R}_p \bar{R}_p^T\beta(e)  +\sum_{i=1}^{N} \bigg{(}s_i^T [f_i({x}_i) -\dot{\nu}_i\\ 
    & + (\sum_{j\in N_i}\frac{{p}_{ij}} {||p_{ij}||}e_{ij})   ]+\frac{ ||s_i||^2 \rho^2_i({x}_i(t))}{2} +\frac{||s_i||^2}{2} \\
    & +\frac{\Upsilon^2_i({x}_i(t-\tau_i))}{2}+  s_i^T g_i({x}_i)u_i\bigg{)}+\frac{N}{2}   +\\ & \frac{1}{2}\sum_{i=1}^{N}\big{(}\Upsilon_i^2({x}_i(t))-\Upsilon_i^2({x}_i(t-\tau_i))\big{)}  \\
    &  \le -k_v \beta^T(e)\bar{R}_p \bar{R}_p^T\beta(e)  +\sum_{i=1}^{N} \bigg{(}s_i^T [f_i({x}_i) -\dot{\nu}_i\\ 
    & + (\sum_{j\in N_i}\frac{{p}_{ij}} {||p_{ij}||}e_{ij})   ]+\frac{ ||s_i||^2 \rho^2_i({x}_i(t))}{2} +\frac{||s_i||^2}{2} \\
    & +  s_i^T g_i({x}_i)u_i\bigg{)}+\frac{N}{2}  +\frac{1}{2}\sum_{i=1}^{N}\big{(}\Upsilon_i^2({x}_i(t))\big{)}.
    \end{split}
    \end{equation}}}with $T_i(s_i,{x}_i)=f_i({x}_i)+\frac{1}{2}s_i \rho_i({x}_i(t))$. Motivated by \cite{chendelay}, let $\{p_i,v_i\} \in \Omega_{x_i}$ be a compact set, then $\Omega_{\varphi_i} \subset \Omega_{x_i}$, where  $\Omega_{\varphi_i} = \{s_i \ | \ ||s_i|| < o_i \}$  with $o_i$ chosen as a small arbitrary constant. As $\Omega_{\varphi_i}$ is an open set, its complement set, $\Omega^0_{\varphi_i} =\Omega_{x_i}-\Omega_{\varphi_i}$, is a compact set.
   
   %Vector $T_i$ is a function of $\{s_i,{x}_i\}\in \Omega^0_{\varphi_i}$, where $\Omega^0_{\varphi_i}$ is a compact set and will be defined later. 
   The ideal approximation of $T_i(x_i,s_i)$ over the compact set $\Omega^0_{\varphi_i}$ is given by $ T_i(s_i,{x}_i)={W}^T_i\varphi_i(s_i,{x}_i) +\epsilon_i(s_i,{x}_i)$, where $W_i\in \mathbb{R}^{\eta_i\times 2}$, $\varphi_i(s_i,{x}_i) \in \mathbb{R}^{\eta_i}$, and $\eta_i$ are the ideal NN weights matrix, activation function and number of neurons, respectively. The approximation through RBFNN over a compact set is $\hat{T}_i(s_i,{x}_i)=\hat{W}^T_i\varphi_i(s_i,{x}_i)$ where $\hat{W}_i\in \mathbb{R}^{\eta_i\times 2}$. Variable $\tilde{W}_i=W_i-\hat{W}_i$ is the estimation error of NN weights matrix. Based on NN approximation property and assuming $T_i$ to be continuously differentiable function, there exists a sufficient number of neurons $\eta_i^*$ such that if $\eta_i>\eta_i^*$, the NN approximation error $\epsilon_i$ is bounded by $||\epsilon_i||\le \xi_i$.

\section{Main Result}
 We propose the following control law for the multi-agent system (\ref{eq1}) 
\begin{equation}\label{eq15_26}
          u_i =
       \left\{
    \begin{array}{ll}
       -c_i(t) s_i -\frac{1}{\underline{g}_i} (\hat{W}^T_i\varphi_i(s_i,{x}_i)\\-\sum_{j \in N_i} (\frac{{p}_{ij}}{||p_{ij}||}e_{ij})+\gamma_i) \\ -(\frac{1}{2 {\underline{g}_i}})s_i^{-1} \Upsilon^2_i({x_i}(t)) , & s_i\in \Omega^0_{\varphi_i} \\
        \vec{0}, & s_i\in \Omega_{\varphi_i}
    \end{array}
\right.
\end{equation}
where $\hat{W}_i$ is the current estimation of ideal weight $W_i$, $s_i^{-1}= s_i/||s_i||^2$, control gain $c_i(t)>0$, and $\gamma_i$ is given by
     \begin{equation}\label{eq15-1}
     \begin{split}
              \gamma_i= & -k_v\sum_{j \in N_i} \frac{({v}_{ij}e_{ij}+\dot{e}_{ij} p_{ij})||p_{ij}||^2- p^T_{ij} v_{ij} p_{ij} e_{ij} } {||p_{ij}||^3}  \\ &-b\ sgn(s_i)+ k_r \dot{e}_r,
        \end{split} 
    \end{equation}with $b\ge \sqrt{2N} ||\dot{v}_r||_{\infty}$. Agents' NN tuning law is given by
   \begin{equation}\label{eq15_27}
    \begin{array}{lr}
    \dot{\tilde{W}}_i=- F_i \varphi_i(x_i,s_i) s_{i}^T +\kappa_i F_i \hat{W}_i,
    \end{array}
\end{equation}  where $\kappa_i>0$ is a constant and $F_i=\Pi_i I_{\eta_i}$ is a positive definite matrix with $\Pi_i$ is a positive constant, $I_{\eta_i}$ is the $\eta_i \times \eta_i $ identity matrix and ${W}=diag({W}_i)$.

\subsection{Stability Analysis}
In this section we formalize the proposed control law (\ref{eq15_26}) with a rigorous stability result. The next theorem provides a result that guarantees that the leader follows the target and followers maintain desired formation with the leader.
\begin{theorem}  
Let the required framework be modeled as an undirected, infinitesimally and minimally rigid graph. Under Assumptions 1-5, select the control input as (\ref{eq15_26}), $b\ge \sqrt{2N} ||\dot{v}_r||_{\infty}$, adaptive NN weights tuning law as (\ref{eq15_27}), and the control gain $c_i(t)$ as
\begin{equation}\label{eqC}
    c_i(t)=\frac{\Gamma_i}{\underline{g}_i} \big{(} \frac{1}{2 ||s_i||^2} \int_{t-\tau_M}^t \Upsilon_i^2(x(z))dz + 1 +\frac{k_c}{2\Gamma_i} \big{)},
\end{equation}where $-k_v   \underline{\sigma} (\bar{R}_p) \le
%-k_v^{'}  \underline{\sigma} (\bar{R}^*_p) =
-\frac{\underline{\Gamma}}{2}$,  $\kappa_i \ge \underline\Gamma \Pi^{-1}_i   $, and  $k_c\ge \underline\Gamma$, $\underline{\Gamma}=min\{\Gamma_1,..., \Gamma_N\}$. Then, the inter-agent distance errors and NN weights matrix estimation errors are UUB for initial conditions that belong to the compact set ${\Omega}_{0}$:
\begin{equation}\label{bound}
\begin{split}
    \Omega_{0}= & \{ \vec{x}_1(0),\vec{x}_2(0), \hat{W}(0), p_r(0)|  p_r(0) \in \Omega_r,\\&  \divides||p_{ij}||-d_{ij} \divides \le \frac{\sqrt{\delta}}{|E|}, \text{and} \ p_i\ne p_j, (i,j) \in E\}, \\ 
\end{split}
 \end{equation}
where $\delta$ is a small positive constant.
\end{theorem}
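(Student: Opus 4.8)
The plan is to build a complete Lyapunov--Krasovskii functional by augmenting the potential function $V_2=M_1+M_2+M_3$ already assembled in (\ref{eq30_1}) with an energy term for the neural-network weight estimation error:
\begin{equation}
V = V_2 + \frac{1}{2}\sum_{i=1}^{N} tr( \tilde{W}_i^T F_i^{-1} \tilde{W}_i ).
\end{equation}
Here $M_1$ penalizes the distance errors $e_{ij}$, $M_2=\tfrac{1}{2}\vec{s}^T\vec{s}$ the backstepping variable, $M_3$ is the Krasovskii term that absorbs the delayed nonlinearity $h_i$, and the trace term handles $\tilde{W}_i=W_i-\hat{W}_i$. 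I would then differentiate $V$ along (\ref{eq1}), starting from the bound (\ref{eq30_1}) for $\dot V_2$ and adding the derivative of the trace term.

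First I would replace the unknown quantity $s_i^T f_i({x}_i)$ together with the disturbance-induced term $\tfrac{1}{2}\|s_i\|^2\rho_i^2$ by the RBFNN parametrization $s_i^T T_i = s_i^T W_i^T\varphi_i + s_i^T\epsilon_i$, with $\|\epsilon_i\|\le\xi_i$ absorbed by Young's inequality. Next I substitute the control law (\ref{eq15_26}) on the active region $s_i\in\Omega^0_{\varphi_i}$ into $s_i^T g_i({x}_i)u_i$. By Assumption~\ref{assump1} the quadratic part gives the damping $s_i^T g_i(-c_i s_i)\le-\underline{g}_i c_i(t)\|s_i\|^2$, and the delay-compensation part gives $-\tfrac{1}{2\underline{g}_i}\Upsilon_i^2\, s_i^T g_i s_i^{-1}\le-\tfrac{1}{2}\Upsilon_i^2({x}_i(t))$, which cancels the leftover $\dot M_3$ contribution $+\tfrac{1}{2}\Upsilon_i^2({x}_i(t))$ in (\ref{eq30_1}). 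By construction $\gamma_i$ reproduces the computable part of $\dot\nu_i$ and the gradient feedback $\sum_{j\in N_i}\frac{p_{ij}}{\|p_{ij}\|}e_{ij}$, so these cancel and leave only the robust term $-b\,sgn(s_i)$ against the bounded, unknown target acceleration: since $b\ge\sqrt{2N}\|\dot v_r\|_\infty$, one has $\sum_i s_i^T\dot v_r\le b\|\vec{s}\|_1$ using $\|\dot v_r\|_2\le\sqrt{2}\|\dot v_r\|_\infty$ and $\|\vec{s}\|_2\le\|\vec{s}\|_1$. Finally, the tuning law (\ref{eq15_27}) makes the derivative of the trace term equal $-s_i^T\tilde W_i^T\varphi_i+\kappa_i\,tr(\tilde W_i^T\hat W_i)$; the first part cancels the NN residual $s_i^T\tilde W_i^T\varphi_i$, and the second is bounded through $tr(\tilde W_i^T\hat W_i)\le\tfrac{1}{2}\|W_i\|^2-\tfrac{1}{2}\|\tilde W_i\|^2$.

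With the designed gain (\ref{eqC}), the term $-\underline{g}_i c_i\|s_i\|^2$ splits into $-\tfrac{\Gamma_i}{2}\int_{t-\tau_M}^t\Upsilon_i^2\,dz-\Gamma_i\|s_i\|^2-\tfrac{k_c}{2}\|s_i\|^2$; since $\tau_i\le\tau_M$ the integral dominates $M_3$, yielding $-\underline\Gamma M_3$, while the remaining damping absorbs the residual $+\tfrac12\|s_i\|^2$ terms and dominates $M_2$. Using Theorem~\ref{theor1} and Lemma~\ref{lem11}, the rigidity term satisfies $-k_v\beta^T(e)\bar R_p\bar R_p^T\beta(e)\le-\underline\Gamma M_1$ through the condition $k_v\underline{\sigma}(\bar R_p)\ge\underline\Gamma/2$, and the tuning condition $\kappa_i\ge\underline\Gamma\Pi_i^{-1}$ forces $-\tfrac{\kappa_i}{2}\|\tilde W_i\|^2\le-\underline\Gamma\cdot\tfrac12\Pi_i^{-1}\|\tilde W_i\|^2$. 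Collecting terms yields the key inequality $\dot V\le-\underline\Gamma V + C$ with $C=\tfrac{N}{2}+\sum_i(\tfrac{\xi_i^2}{2}+\tfrac{\kappa_i}{2}\|W_i\|^2)$ finite, so the comparison lemma gives $V(t)\le V(0)e^{-\underline\Gamma t}+\tfrac{C}{\underline\Gamma}(1-e^{-\underline\Gamma t})$ and hence UUB of $e_{ij}$, $\vec{s}$, and $\tilde W_i$ for initial data in $\Omega_0$.

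I expect the main obstacle to be the matrix character of $g_i({x}_i)$: because the controller scales its cancellation terms by the scalar bound $\underline g_i$ rather than by $g_i^{-1}$, the feedforward, NN, and gradient terms do not cancel exactly, and the residual $s_i^T(g_i/\underline g_i-I)(\cdot)$ must be controlled through Assumption~\ref{assump1} or absorbed into the damping to keep the clean $-\underline\Gamma V+C$ bound. The secondary subtlety is ensuring that the framework stays infinitesimally and minimally rigid along trajectories so that $\underline{\sigma}(\bar R_p)>0$ is uniform; this is precisely what the constraints $\divides\|p_{ij}\|-d_{ij}\divides\le\sqrt{\delta}/|E|$ and $p_i\ne p_j$ defining $\Omega_0$ enforce, keeping $\bar R_p$ away from rank deficiency and the normalized gradient $p_{ij}/\|p_{ij}\|$ well defined.
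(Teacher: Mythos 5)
Your proposal follows essentially the same route as the paper's own proof: the identical Lyapunov--Krasovskii functional $V=V_2+\tfrac12\sum_i tr(\tilde W_i^TF_i^{-1}\tilde W_i)$, the same cancellation of $\dot M_3$ by the $s_i^{-1}\Upsilon_i^2$ term and of $\dot\nu_i$ by $\gamma_i$, the same use of the tuning law to collapse the NN residual into the trace term, and the same final inequality $\dot V\le-\underline\Gamma V+B_M$ with the identical constant, followed by the comparison lemma. The only element of the paper's argument you omit is the Filippov/differential-inclusion treatment of the discontinuity introduced by $sgn(s_i)$, and the $g_i/\underline g_i$ mismatch you flag as the main obstacle is handled no more rigorously in the paper itself, which silently relies on Assumption~\ref{assump1} for that cancellation.
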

 
\begin{proof} 
The error dynamics with respect to $e_{ij}$ and $s_i$ in a closed loop system with (\ref{eq15_26})-(\ref{eq15_27}) has right-hand side discontinuity because of $sgn(s_i)$ in (\ref{eq15-1}). We choose the non-smooth Lyapunov function candidate $V=V_2+ M_4,$ with $ M_4=\frac{1}{2}\sum_{i=1}^{N} tr(\tilde{W}_i^TF_i^{-1}\tilde{W}_i)$.  Let $\dot{\varsigma}_i=\mathcal{F}_i(\varsigma_i,t)$ be the closed loop system where $\varsigma_i=[e_{ij},s_i]$, then  $\mathcal{F}_i(\varsigma_i,t)$ is continuous everywhere except in the set $\{(\varsigma_i,t) \ | \ ||s_i|| < o_i \}$. For this system Filippov solution exists by satisfying differential inclusion $\dot{\varsigma}_i\in {K}_i {[\mathcal{F}}_i] ({\varsigma}_i,t)$ where ${K}_i {[\mathcal{F}}_i] ({\varsigma}_i,t)$ is an upper semi-continuous, nonempty, set-valued map \cite[p.~171]{de2019formation}. Then time-derivative of $V$ is given by 
%First we consider $s_i\in \Omega^0_{\varphi_i}$, $\forall i \in {1,...,N}$. By taking time-derivative from Lyapunov function and using (\ref{eq15_26}), it yields
 { \small{ \begin{equation}\label{eq48_11}
\begin{split}  
     \dot{V}  &      \stackrel{a.e.}{\in} \sum_{i=1}^N\frac{\partial V}{\partial \varsigma_i}  {K}_i {[\mathcal{F}}_i] ({\varsigma}_i,t)  
           \\ & \subset  -k_v \bar{R}_p\beta^T(e)\beta(e)\bar{R}^T_p  +\sum_{i=1}^{N} \bigg{(}s_i^T T_i(s_i,{x}_i)
    +\frac{\Upsilon_i^2({x}_i(t))}{2} \\  
    & +\frac{||s_i||^2}{2}+ s_i^T \Big{(}g_i({x}_i)\big{[}-c_i(t) s_i -{\frac{1} {\underline{g}_i}} \hat{T}_i(s_i,{x}_i) \\
   &+(\frac{1} {\underline{g}_i})(-\sum_{j \in N_i}\frac{{p}_{ij}} {||p_{ij}||}e_{ij}+\gamma_i)  -(\frac{1}{2 {\underline{g}_i}})s^{-1} \Upsilon_i^2({x}_i(t))\big{]}
    \\ & + \sum_{j \in N_i}\frac{{p}_{ij}} {||p_{ij}||}e_{ij}-\dot{\nu}_i\Big{)} \bigg{)}+ \frac{N}{2}+\sum_{i=1}^{N} tr(\tilde{W}_i^TF_i^{-1}\dot{\tilde{W}}_i).
    \end{split} 
\end{equation}}}
Applying Cauchy's inequality and substituting equation (\ref{eq15-1}), one has
   \begin{equation}\label{eq48_1}
    \begin{split}
       \dot{V}   &  \le -k_v \bar{R}_p\beta^T(e)\beta(e)\bar{R}^T_p  +\sum_{i=1}^{N} \bigg{(}s_i^T T_i(s_i,{x}_i)
    +\frac{\Upsilon_i^2({x}_i(t))}{2} \\  
    & +\frac{||s_i||^2}{2}+ s_i^T \Big{(}g_i({x}_i)\big{[}-c_i(t) s_i -{\frac{1} {\underline{g}_i}} \hat{T}_i(s_i,{x}_i)\\ 
    &+(\frac{1} {\underline{g}_i})(-\sum_{j \in N_i}\frac{{p}_{ij}} {||p_{ij}||}e_{ij}+\gamma_i)  -(\frac{1}{2 {\underline{g}_i}})s^{-1} \Upsilon_i^2({x}_i(t))\big{]}  \\ & +\sum_{j \in N_i}\frac{{p}_{ij}} {||p_{ij}||}e_{ij}-\dot{\nu}_i\Big{)} \bigg{)}+ \frac{N}{2}+\sum_{i=1}^{N} tr(\tilde{W}_i^TF_i^{-1}\dot{\tilde{W}}_i)\\ &
     \le -k_v \bar{R}_p\beta^T(e)\beta(e)\bar{R}^T_p  +\sum_{i=1}^{N} \bigg{(}
    +\frac{\Upsilon_i^2({x}_i(t))}{2} \\  
    & +\frac{||s_i||^2}{2}+ s_i^T \Big{(}g_i({x}_i)\big{[}-c_i(t) s_i \\ 
    &+(\frac{1} {\underline{g}_i})(-\sum_{j \in N_i}\frac{{p}_{ij}} {||p_{ij}||}e_{ij}+\gamma_i)  -(\frac{1}{2 {\underline{g}_i}})s^{-1} \Upsilon_i^2({x}_i(t))\big{]}  \\ & +\sum_{j \in N_i}\frac{{p}_{ij}} {||p_{ij}||}e_{ij}-\dot{\nu}_i\Big{)} \bigg{)}+ \frac{N}{2}+\sum_{i=1}^{N} tr(\tilde{W}_i^TF_i^{-1}\dot{\tilde{W}}_i)\\ & +
      \sum_{i=1}^{N} s^T_i(\hat{T}_i(s_i,{x}_i)-T_i(s_i,{x}_i)),
    \end{split}
\end{equation}
where $\tilde{T}_i=\hat{T}_i(s_i,{x}_i)-T_i(s_i,{x}_i)=\tilde{W}_i^T \varphi_i(s_i,x_i)$. As A=$s_i^T \tilde{T}_i $ is a scalar we have $A=A^T$; therefore, we can write the following equation
 \begin{equation}\label{eq482}
    \begin{split}
      \sum_{i=1}^{N} s^T_i(\hat{T}_i(s_i,{x}_i)-T_i(s_i,{x}_i))&  = \sum_{i=1}^{N} s^T_i(\tilde{W}_i^T \varphi_i(s_i,x_i) ) \\ & = \sum_{i=1}^{N} tr(\tilde{{W}}_i^T \varphi_i(s_i,{x}_i) s^T_i )).
    \end{split}
\end{equation}
Consequently,
\begin{equation}
\begin{split}
 &\sum_{i=1}^{N} tr(\tilde{W}_i^TF_i^{-1}\dot{\tilde{W}}_i)+  \sum_{i=1}^{N} s^T_i(\hat{T}_i(s_i,{x}_i)-T_i(s_i,{x}_i))= \\ & \sum_{i=1}^{N} tr\big{(}\tilde{{W}}_i^T (F_i^{-1}\dot{\tilde{W}}_i+ \varphi_i(s_i,{x}_i ) s^T_i ) \big{)}.
\end{split}
\end{equation}
which yields

    {{   \begin{equation}\label{eq49_1}
    \begin{split}
    \dot{V}   & \le -k_v \underline{\sigma}(\bar{R}_p )\beta^T(e)\beta(e)  +\sum_{i=1}^{N} \bigg{(}||s_i||~||\epsilon_i(s_i,{x}_i)||  \\
    & +\frac{||s_i||^2}{2}+ s_i^T \Big{(}-g_i({x}_i)c_i(t) s_i -(b\  sgn(s_i) +  \dot{v}_r)\Big{)} \bigg{)}  \\ 
    & +  \frac{N}{2}+  \sum_{i=1}^{N} tr(\tilde{{W}}_i^T\big{(}F_i^{-1}\dot{\tilde{W}}_i+ \varphi_i(s_i,{x}_i \big{)} s^T_i )).
      \end{split}
    \end{equation}}}
  By utilizing Young's inequality, substituting equations (\ref{eq15_27}), (\ref{eqC}) into (\ref{eq49_1}), and  using Young's inequality and assuming boundedness of ideal NN weights matrix by $||W||_F \le W_M$, with a fixed bound $W_M$, we have 
\begin{equation}\label{eq54_1}
 \begin{split}
    \dot{V}  &
    \le  -\underline{\Gamma}M_1 -\underline{\Gamma}M_2  -\underline{\Gamma}M_3 - \underline{\Gamma} \sum_{i=1}^{N}\frac{\Pi^{-1}_i}{2}||\tilde{W}_i||^2_F \\
    & + \frac{N}{2} + \frac{1}{2}\sum_{i=1}^{N} \Big{(} \xi^2_i+ \kappa_i W^2_M \Big{)}   \le  -\underline{\Gamma} V + B_M,
     \end{split}
    \end{equation}
    where $B_M= \frac{N}{2} + \frac{1}{2}\sum_{i=1}^{N} \Big{(} \xi^2_i+ \kappa_i W^2_M \Big{)} $. From \cite[Lemma~1.1]{ssg}, inequality (\ref{eq54_1}) implies  
  \begin{equation}\label{eq55_1}
V(t) \le    \frac{B_M}{\underline{\Gamma}}+(V(0)+\frac{B_M}{\underline{\Gamma}})e^{-\underline{\Gamma}t}.    
\end{equation}
 It can be seen that using control law (\ref{eq15_26}), the distance errors are uniformly ultimately bounded (UBB).  If $s_i\in \Omega_{\varphi_i}$, as  ${o_i}$ is chosen small enough, it follows that the formation control has been achieved and no more control effort is required. \hfill $\blacksquare$ 
\end{proof}
\begin{dis}
If inequality (\ref{eq55_1}) holds, as $t \rightarrow \infty$ the radius can be reduced by choosing $\underline{\Gamma}$ large enough. Thus, $\underline{\Gamma}$ has a direct impact on formation stability and tracking performance.
\end{dis}
\begin{remark}
Through control law (\ref{eq15_26}), the multi-agent systems moves toward the equilibrium which is located in $\Omega_{0}$; as a result the distance error remains in the invariant set  ${\Omega}_{0}$ \cite{de2019formation}. Moreover, $\delta$ is chosen sufficiently small positive constant and consequently, $p_{ij}\ne0$.

\end{remark}

\begin{lemma}\label{lem3}%[\hspace{-.1mm}\cite{ssg}]\label{lem3}
Consider the following function
{\small{\begin{equation}\label{NEEW}
\begin{split}
   V&= \frac{1}{2} \beta^T(e)\beta(e)+   \frac{1}{2} tr(\tilde{W}^T F^{-1}\tilde{W}) + \frac{1}{2} \vec{s}^T \vec{s} \\ & + \frac{1}{2}\sum_{i=1}^{N}\int_{t-\tau_i}^t \Upsilon_i^2(x_i(z)) dz,
    \end{split}
\end{equation}}} where $\tilde{W}=diag(\tilde{W}_i)$. If $\dot{V}$ satisfies $\dot{V} \le - \alpha_1 V + \alpha_2$, then for a bounded initial conditions in a bounded set $\Omega_0$ (\ref{bound})\\
i) the states and NN weights remain within a bounded set
\begin{equation}\label{eq63}
\begin{split}
   \Omega_{b_i}=& \{p_i(t), v_i(t), \hat{W}_i(t)| ~ ||p_i(t)||\le \bar{P}_i^*, ||v_i(t)||\le \bar{V},\\ &
     ||\hat{W}_i(t)||_F \le W_M+ \bar{W}_i, p_r(t) \in \Omega_r     \},
   \end{split}
\end{equation} where constants $\bar{P}_i^*$ and $\bar{W}_i$ and $\bar{V}$  are defined as  
\begin{equation}
    \begin{split}
        \bar{P}_i^*&=  (\alpha^*_i+1) \bar{P} +\bar{P}_r + \bar{\zeta}, \ 
        \bar{W}_i= \sqrt{\frac{2V(0)+\frac{2\alpha_2}{\alpha_1}}{\Pi_i}},\\
        \bar{V}&= \bar{S}(1+k_v \bar{\sigma}(\bar{R}_p))+\sqrt{2N}\big{(}||\dot{p}_r|| + \bar{\zeta} \ \big{)},
    \end{split}
\end{equation}
and $\alpha^*_i$ denotes the number of vertices in the minimum path from the leader to the $i$-th agent and
\begin{equation}
    \begin{split}
        \bar{S}&=\sqrt{2V(0)+ \frac{2\alpha_1}{\alpha_2} },  \bar{P} = \sqrt{2V(0)+\frac{2\alpha_2}{\alpha_1}}+D, \\ 
             \bar{\zeta}&= ||e_r(0)|| +\frac{\sqrt{2}\bar{S}}{k_r}(1+k_v \bar{\sigma}(R_p)).
    \end{split}
\end{equation}
ii) The states and weights converge  to a compact set
\begin{equation}\label{eq65}
   \Omega_{c_i}= \{p_i(t), \hat{W}_i(t)| \lim_{t \rightarrow \infty} ||e_{ij}||=\Xi_{e_i} ,   \lim_{t \rightarrow \infty} ||\tilde{W}_{i}||=\Xi_{W_i} \}, 
\end{equation}
with $\Xi_{e_i}=\sqrt{\frac{2\alpha_2}{\alpha_1}}$ and $\Xi_{W_i}=\sqrt{\frac{2\alpha_2}{\alpha_1 \Pi_i}}$.
\end{lemma}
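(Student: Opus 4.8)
The plan is to turn the hypothesis $\dot V \le -\alpha_1 V + \alpha_2$ into an explicit time bound on $V(t)$, then exploit that $V$ in (\ref{NEEW}) is a sum of four individually non-negative terms to peel off pointwise bounds on the distance errors $e_{ij}$, the backstepping variables $s_i$, and the weight errors $\tilde W_i$; from these I reconstruct bounds on the physical states $p_i$, $v_i$ and on $\hat W_i$. First I would invoke the comparison lemma \cite[Lemma~1.1]{ssg}, exactly as in (\ref{eq55_1}), to get
\begin{equation*}
V(t) \le \frac{\alpha_2}{\alpha_1} + \Big(V(0)+\frac{\alpha_2}{\alpha_1}\Big)e^{-\alpha_1 t},
\end{equation*}
so that $V(t)\le V(0)+\alpha_2/\alpha_1$ for all $t\ge 0$ and $\limsup_{t\to\infty}V(t)\le \alpha_2/\alpha_1$. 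Since each of $\tfrac12\beta^T(e)\beta(e)$, $\tfrac12\vec s^T\vec s$, $M_3$, and $\tfrac12 tr(\tilde W^T F^{-1}\tilde W)$ is non-negative, each is bounded by $V(t)$; hence $\|\beta(e)\|\le\bar S$ and $\|s_i\|\le\|\vec s\|\le\bar S$ with $\bar S=\sqrt{2V(0)+2\alpha_2/\alpha_1}$, while $\Pi_i^{-1}\|\tilde W_i\|_F^2\le 2V$ yields the weight-error bound $\bar W_i$. From $e_{ij}=\|p_{ij}\|-d_{ij}$ and $\max d_{ij}<D$ this gives the per-edge bound $\|p_{ij}\|\le\bar P$.

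For part (i) the central step is a graph-path argument. Since the framework is infinitesimally and minimally rigid it is connected \cite{jackson2007notes}, so there is a minimum path of $\alpha_i^*$ vertices from the leader to agent $i$; summing $\|p_{ij}\|\le\bar P$ along this path, together with $\|p_1\|\le\|p_r\|+\|e_r\|\le \bar P_r+\bar\zeta$, produces $\bar P_i^*$. The bound $\|e_r\|\le\bar\zeta$ is obtained from the tracking-error dynamics: differentiating $e_r=p_r-p_1$ and substituting $v_1$ from the definition of $s_1$ in (\ref{eq10}) gives
\begin{equation*}
\dot e_r = -k_r e_r - s_1 + k_v\!\!\sum_{j\in N_1}\!\frac{p_{1j}}{\|p_{1j}\|}e_{1j},
\end{equation*}
a stable scalar-gain linear system driven by the already-bounded terms $s_1$ and $\bar R_p^T\beta(e)$ (with $\|\bar R_p^T\beta(e)\|\le\bar\sigma(\bar R_p)\bar S$); the resulting ISS estimate gives the $\tfrac{\sqrt2\,\bar S}{k_r}(1+k_v\bar\sigma(\bar R_p))$ contribution to $\bar\zeta$.

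Feeding these into $v_i = s_i - k_v\sum_{j\in N_i}\frac{p_{ij}}{\|p_{ij}\|}e_{ij} + v_r + k_r e_r$ bounds $\|v_i\|$ by $\bar V$, and $\hat W_i=W_i-\tilde W_i$ with $\|W_i\|_F\le W_M$ bounds $\|\hat W_i\|_F$ by $W_M+\bar W_i$, completing $\Omega_{b_i}$. For part (ii) I pass to the limit: the exponential term vanishes and $V\le\alpha_2/\alpha_1$, so the same peeling gives $\limsup_{t\to\infty}\|e_{ij}\|\le\sqrt{2\alpha_2/\alpha_1}=\Xi_{e_i}$ and $\limsup_{t\to\infty}\|\tilde W_i\|\le\Xi_{W_i}$, which is the claimed $\Omega_{c_i}$.

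The hard part is part (i), specifically translating the \emph{aggregate} bound on $V$ into a \emph{state-space} bound on $\|p_i\|$: this needs both the connectivity/path structure of the rigid graph and the ISS estimate for the tracking error $e_r$, since the Lyapunov bound only directly controls the relative quantities $e_{ij}$ and the auxiliary variables $s_i$. By contrast, the weight bounds and the entire limit claim in part (ii) follow almost immediately from the non-negativity decomposition of $V$ and the exponential decay, so I expect those to be routine.
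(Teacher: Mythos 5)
Your proposal is correct, but note that the paper itself gives no proof of this lemma --- it simply writes ``Similar proof is provided in \cite{ssg}, hence we skip the proof here.'' Your argument is a faithful reconstruction of the standard chain that such a citation presupposes: comparison lemma $\Rightarrow$ uniform bound on $V$, non-negativity of each summand in (\ref{NEEW}) $\Rightarrow$ bounds on $\|\beta(e)\|$, $\|\vec{s}\|$, $\|\tilde W_i\|_F$, then the graph-path argument plus the ISS estimate for $\dot e_r=-k_r e_r - s_1 + k_v\sum_{j\in N_1}\frac{p_{1j}}{\|p_{1j}\|}e_{1j}$ to convert relative bounds into bounds on $\|p_i\|$ and $\|v_i\|$, and you correctly identify this last conversion as the only non-routine step. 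Two small remarks. First, you silently use $\bar S=\sqrt{2V(0)+2\alpha_2/\alpha_1}$ whereas the lemma statement has the ratio inverted ($2\alpha_1/\alpha_2$); your version is the one consistent with $\tfrac12\|\vec s\|^2\le V\le V(0)+\alpha_2/\alpha_1$, so you are in effect correcting a typo in the statement --- worth flagging explicitly rather than passing over. Second, in part (ii) the statement writes $\lim_{t\to\infty}\|e_{ij}\|=\Xi_{e_i}$ as an equality of limits, which cannot be literally true (the limit need not exist, and if it did it need not attain the bound); your $\limsup$ formulation is the correct reading and is what the decomposition actually delivers. Neither issue is a gap in your reasoning.
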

\vspace{2pt}
\begin{proof}
Similar proof is provided in \cite{ssg}, hence we skip the proof here.
\hfill $\blacksquare$ 
\end{proof}

It can be shown, using Gershgorin circle theorem, that $\bar{\sigma}(\bar{R}_p) \leq \sqrt{2N-2}$. Thus, the compact set of $\Omega^0_{\varphi_i}$ defined in (\ref{eq15_26}) can be specified as
\begin{equation}\label{eq80}
\Omega^0_{\varphi_i} = \{ p_i,v_i, s_i \ | \ ||p_{i}|| \le \bar{P}_i^*, ||v_i|| \le \bar{V}, ||s_i|| \le \bar{S} \}.
\end{equation}
\begin{remark}
The variables $s_i$ and $\gamma_i$ are bounded as all of their components are bounded as well as the control gain $c_i(t)$ and the control input (\ref{eq15_26}).
\end{remark}

%%%%%%%%%%%%%%%%
%%%%%%%%%%%%%%%%%%%%%%%%%%%%%%%%%%%%%%%%%%%%%%%%%%%%%%%%%%%%%%%%%%%%%%%%%%%%
\section{Simulation Results}\label{sec5}
Here we provide numerical results in order to verify and validate the performance of the proposed control method. In addition, two comparisons with an existing displacement-based control are conducted to demonstrate advantages of the proposed method. In order to evaluate and quantify the performance of each method, a performance index is introduced. Moreover, based on Example \ref{square}, we also compared the performance of the proposed method with an existing method presented in \cite{chen2018leader}.
\begin{figure}[t]
        \centering
        \includegraphics[scale=.48]{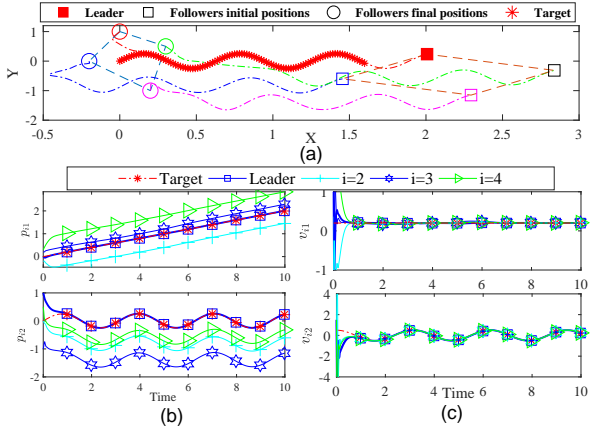}
        \caption{(a) Desired formation is a square with side $d=1$; (b) Trajectories of the agents and the target along $x$ and $y$ directions; (c) Velocity of agents and the target along $x$ and $y$ directions.}
        \label{fig1_1}
\end{figure} 
    
\begin{figure}[t]
        \centering
        \includegraphics[scale=.54]{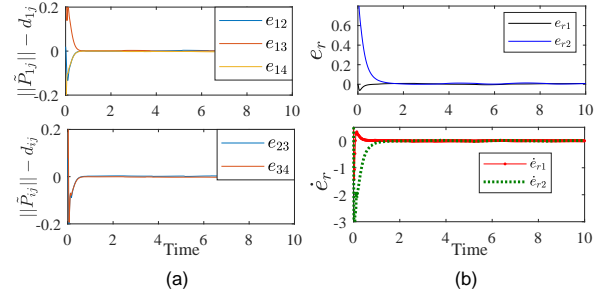}
        \caption{(a) The distance errors $e_{ij}, \ (i,j) \in E$ of proposed method. (b) Target  tracking  error  of  the  leader  and  its  time-derivative  of  proposed  method.}
        \label{fig7}
\end{figure} 

\begin{figure}[t]
        \centering
        \includegraphics[scale=.61]{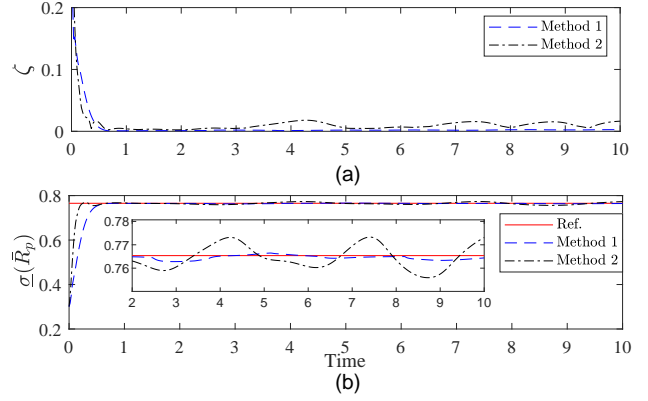}
        \caption{(a) Performance comparison (ADE) between proposed distance-based method (Method 1) and displacement method (Method 2) \cite{chen2018leader}; (b) Minimum singular value of normalized rigidity matrix comparison between proposed distance-based method (Method 1) and displacement method (Method 2) \cite{chen2018leader} (solid line is the desired value for minimum singular value of normalized rigidity matrix in Example \ref{square}).}
        \label{fig1-comp}
\end{figure}

\begin{example}\label{ex2}
Consider a nonlinear multi-agent system with four agents in a plane, where dynamics of each agent is given by (\ref{eq1}):
{\footnotesize{
\begin{equation}\label{exam-1}
    \begin{split}
        \dot{p}_{i}(t) =&v_{i}(t), \\
        \dot{v}_{i}(t) =&\left(
        \begin{array}{c}
            a_{i1}v_{i2}(t)v_{i1}(t)+sin(a_{i1} p_{i1}(t))  \\
            b_{i1}p_{i2}(t)v_{i2}(t)+cos(b_{i1} p_{i2}(t))  
        \end{array} \right) \\ 
        +  &\left(
        \begin{array}{cc}
            1+cos(v_{i4})sin(v_{i3}^2) & 0  \\
            0 & 1+cos(v_{i3})sin(v_{i4}^2) 
        \end{array} \right) u_i(t) \\
        +&\left(\begin{array}{c}
         c_{i1} p_{i1} (t-\tau_i) cos(v_{i1} (t-\tau_i)) \\
          c_{i2} p_{i2} (t-\tau_i) sin(v_{i2} (t-\tau_i))
        \end{array} \right) \\
        +&\left(\begin{array}{c}
        d_{i1}v_{i2}p_{i1}^2cos(1.5t) \\
        d_{i2}(v_{i1}+p_{i2})sin(t)
        \end{array} \right),
        \ \ \ \ \     i=1,...,4.
    \end{split}
\end{equation}}} Parameters $a_{i1}$, $b_{i1}$, $c_{i1}$, $ c_{i2}$, $d_{i1}$, $d_{i2}$ of each agent are given in Table \ref{tab2}. 
The initial conditions for the four agents are: $p_{1}(0)=(0,1)^T$, $v_1(0)=(1,1.5)^T$, $p_{2}(0)=(-0.2,0 )^T$ ,$v_2(0)=(-1,1)^T$, $p_{3}(0)=(0.2,-1)^T$, $v_3(0)=(1,-1)^T$, and $p_{4}(0)=(0.3,0.5)^T$, $v_4(0)=(0.5,0.5)^T$.  The time-delays are $\tau_1=0.10$, $\tau_2=0.18$, $\tau_3=0.13$, $\tau_4=0.12$, and $\tau_M=0.2$. 
The target velocity and initial position are $v_r=[0.2,0.5cos(2t)]^T$ and $p_r(0)=[0,0]^T$, respectively. To select $||\dot{v}_r||_{\infty}=0.5$, we choose $b=3$ and $k_r=3$. RBFNN is selected with 9 neurons and $\kappa_i =2.5, F_i=10I_{9}, k_v=15$. To satisfy Assumptions \ref{assump3_1} and \ref{assump2}, we choose $\Upsilon_i(x_i(t))=\sqrt{(c_{i1}p_{i1})^2+(c_{i2}p_{i2})^2}$  and $\Gamma_i=2$,  $k_C=200$ in (\ref{eqC}).

The desired distances and communication topology are given in Fig. \ref{top1}, with $d=1$. Figs.  \ref{fig1_1}-\ref{fig7} show results of the proposed control law (\ref{eq15_26}) for a nonlinear multi-agent system. Fig. \ref{fig1_1}(a) shows how agents form a square where each side is equal to the desired distance ($d=1$). Fig. \ref{fig1_1}(b) represents the trajectories of the agents and the target in $x$ and $y$ directions. Velocity of agents and the target in $x$ and $y$ directions are shown in Fig. \ref{fig1_1}(c).
 
To compare the results with a displacement-based method, the modified version of \cite{chen2018leader} is simulated. The method in \cite{chen2018leader} shows a leader-following consensus control of second-order  nonlinear systems with state time-delay. By applying this method and adding constant displacements of the desired positions as \cite[p.~127]{mesbahi2010graph} to the formation control problem, we obtain a shape-based control.

To evaluate the performance of proposed method in comparison with the displacement method of \cite{chen2018leader}, we define an average distance error (ADE) $\zeta(t)$ as $\zeta (t)=\frac{1}{|E|}\sum_{(i,j) \in E} |(||\tilde{p}_{ij}||-d_{ij})|.$

We use ADE, to compare our results with \cite{chen2018leader}. As shown in Fig. \ref{fig1-comp}(a), the proposed method improves the ADE when applied to a second-order nonlinear multi-agent systems. Moreover, to provide a better comparison, minimum singular values of normalized rigidity matrix of square for our proposed method (dashed line) and Method 2 (dash-dotted line) \cite{chen2018leader} are depicted in Fig. \ref{fig1-comp}. (b). It has been shown in Example \ref{square}, if the formation reaches the square shape, the minimum singular value is $\sqrt{2-\sqrt{2}}$. This shows an improvement in performance in comparison with \cite{chen2018leader}. Moreover, it can be noted that the minimum singular value of normalized rigidity matrix is always less than the introduced upper bound in Theorem \ref{theor1}.
%which is plotted in Fig. \ref{fig1-comp}. The proposed method has improved the AFDE compared to the displacement method for a nonlinear system in Brunovsky form.
\vskip 6pt
 \begin{table}[htbp]
                \begin{center}
                \caption{Parameters $a_{i1}, b_{i2}$, $c_{i1}, c_{i2}$, $d_{i1}, d_{i2}$ for the $i$-th agent. }
\vspace{-.1cm}      {\small{     \begin{tabular}{c|c c c c c c c}
                \hline 
                \textbf{$i$} & $a_{i1}$ &  $a_{i2}$ & $b_{i1}$ &$b_{i2}$ & $c_{i1}$ &$c_{i2}$    \\
                \hline
                1& 0.3 &  1 & 1 & -1 & -2.4 & 2.1  \\
                2&  0.7  &  -0.2 &  -1.2 & -2.2 & 1.8 & -1.5   \\
               3&   -0.7   &  -0.8 & 2.1& 1.2 & -0.4 & 1.3 \\
                 4&  -0.6   & 0.4 & -0.5 & -0.7 & 0.6& 0.8 \\  
                \hline
                \hline
             \end{tabular}}}
            \label{tab2}
        \end{center}
    \end{table}
  \end{example}
%\
\section{Conclusion}
A neuro-adaptive backstepping and rigid graph theory-based control has been proposed for a distance-based formation control and target tracking of second-order nonlinear multi-agent systems modelled by an undirected graph in the presence of bounded disturbance and unknown state time-delay. The RBFNN has been used to compensate for the unknown nonlinearity of dynamical system and disturbance. The rigorous stability analysis based on the Lyapunov stability theory shows UUB of distance error. The upper bound for the minimum singular value of the normalized rigidity matrix has been introduced and used in designing of the control systems. The simulation results have verified the performance of the proposed method. Two sets of comparisons have been provided to demonstrate the efficiency and improvements of the proposed method compared with the recent results in the literature. Future work will consider time-delay in communication links of the undirected graph.
\ifLongVersion

%%%%%%%%%%%%%%%%%%%%%%%%%%%%%%%%%%%%%%%%%%%%%%%%%%%%%%%%%%%%%%%%%%%%%%%%%%%%%%%%%%%%%%%%%%%%%%%%

\bibliographystyle{IEEEtran}
\bibliography{Journal}

%%%%%%%%%%%%%%%%%%%%%%%%%%%%%%%%%%%%%%%%%%%%%%%%%%%%%%%%%%%%%%%%%%%%%%%%%%%%%%%%%%%%%%%%%%%%%%

\end{document}